\newtheorem{lemma}{Lemma} 
\newtheorem{example}{Example} 
\begin{document}
		
\title{An angular momentum approach to quantum insertion errors}

\author{Lewis Bulled}
\email{lbulled1@sheffield.ac.uk}

\author{Yingkai Ouyang}
\email{y.ouyang@sheffield.ac.uk}

\affiliation{School of Mathematical and Physical Sciences, University of Sheffield, Sheffield, S3 7RH, United Kingdom}

\date{\today}

\begin{abstract}
    \noindent Quantum insertion errors are a class of errors that increase the number of qubits in a quantum system. Despite a wealth of research on classical insertion errors, there has been limited progress towards a general framework for correcting quantum insertion errors. We detail a quantum error correction protocol that can correct single insertion errors on a class of gapped permutation-invariant codes. We provide a simple two-stage syndrome extraction protocol that yields a two-bit syndrome, by measuring the total angular momentum and its projection along the $z$-axis (modulo the code gap) of the post-insertion state. We demonstrate that these measurements project the state onto a new codespace, and we detail a teleportation protocol to map the projected state back to a permutation-invariant code on the desired number of qubits.
\end{abstract}

\maketitle

\section{Introduction}
\label{Sec:Introduction}

\noindent There is a real need for quantum technologies to realise large-scale, fault-tolerant quantum computing. It is well-known that quantum error correction (QEC) plays an important role in this endeavour, due to the presence of external noise in quantum computers. As such, QEC literature is vast, with a focus on single-qubit errors such as Pauli errors, amplitude damping, depolarisation and decoherence. A class of quantum errors that has only recently been considered is so-called synchronisation errors, that is, errors which alter the number of qubits in a quantum system. These include erasure, deletion and insertion errors; erasures delete qubits at known locations, whereas deletions (insertions) delete (insert) qubits at unknown locations according to some probability distribution.

\noindent Synchronisation errors often occur in optical quantum communication. When states are transmitted over a lossy quantum channel, some level of qubit loss is expected due to erasure and/or deletion errors \cite{Ouyang_2021}; equally, insertion errors may be exhibited due to environmental noise or hardware imperfections. Moreover, insertion errors occur in quantum key distribution where eavesdroppers conduct side-channel attacks, such as trojan-horse attacks \cite{Vinay_Kok_2018, Jain_2015, Navarrete_2022, Nasedkin_2023}. As such, synchronisation error QEC is paramount to the future establishment of a quantum network. Despite this, is unclear how to correct synchronisation errors; since the dimension of the Hilbert space is altered, conventional QEC techniques involving stabiliser or surface codes cannot be used.

\noindent Classically, research on insertion errors dates back more than sixty years. Levenshtein demonstrated an equivalence between the correctability of classical deletions and insertions \cite{Levenshtein_1966}, that is, a classical code capable of correcting $t$ deletions can also correct $t$ insertions (and vice versa.) Research on classical insertion-deletion (``insdel'') codes is reasonably well-studied \cite{Schulman_Zuckerman_1997, Haeupler_2017, Duc_Liu_2021}, with applications in racetrack memories and DNA storage \cite{Chee_2017, Buschmann_2013}. Despite this, limited progress has been made towards the quantum analogue.

\noindent Whilst research on quantum deletion codes has gained some traction in recent years \cite{Nakayama_Hagiwara_Jan2020, Hagiwara_Nakayama_Apr2020, Shibayama_Hagiwara_2021, Ouyang_2021}, there remains a paucity of research on quantum insertion codes. Although quantum insdel channels have been formulated \cite{Leahy_2019}, as yet there exists no general framework for correcting insertion errors in the literature. Hagiwara provided the first instance of a quantum insertion code \cite{Hagiwara_2021}, whilst Shibayama and Hagiwara provided a class of codes capable of correcting both single insertions and deletions \cite{Shibayama_Hagiwara_2022}. Levenshtein's classical insdel equivalence \cite{Levenshtein_1966} remains an open problem for quantum errors, so naturally much of the focus has been towards a quantum insdel equivalence. Shibayama and Ouyang proved such an equivalence for separable insertions \cite{Shibayama_Ouyang_2021}, with Shibayama then extending this to arbitrary single insertions \cite{Shibayama_2025}. Ouyang and Brennen adopted a different QEC approach in \cite{Ouyang_Brennen_2025}, proposing a syndrome extraction protocol that measures the total angular momentum (AM) of consecutive subsets of qubits. The authors demonstrated that this corrects all single-qubit and deletion errors, though insertion errors were not considered.

\noindent We present the first instance of a general quantum insertion error-correcting protocol, which can be used to correct single insertion errors on gnu codes with a code gap $g$. We adopt the same AM approach as in \cite{Ouyang_Brennen_2025}, in contrast to the combinatorial arguments utilised in \cite{Shibayama_Hagiwara_2021, Shibayama_Hagiwara_2022}. Syndrome extraction proceeds by measuring the total AM and its projection along the $z$-axis modulo $g$, via the operators $J^2$ and $J^z$ (mod $g$) respectively. These measurements yield a two-bit syndrome $(j,w)$ and project the post-insertion state onto a new codespace. We detail a teleportation protocol to map the projected state back to a permutation-invariant (PI) code on the desired number of qubits. Our QEC protocol is a simple algorithm that requires only two measurements, and decoding is straightforward since our syndrome comprises just two classical bits.

\noindent Furthermore, all elements of our protocol (state preparation, measurements and QEC) can be implemented using geometric phase gates (GPGs) \cite{Luis_2001, Xiaoguang_Zanardi_2002}, as in \cite{Johnsson_2020}. GPGs rely on a dispersive coupling of the qubits with a bosonic mode, and require only four native operations. First is the initialisation of the mode, which is achievable with a laser. Second is a coherent on-off dispersive coupling of all qubits to that mode, such as in cavity quantum electrodynamics architectures \cite{Eickbusch_2022, Wang_2024}. Third, we require displacements of the mode, and fourth, homodyne detection, both of which have implementations on various platforms \cite{Maioli_2005, Du_Vogt_Li_2023}. By moving beyond a universal gate decomposition (such as the Clifford + $T$ gateset), GPGs sidestep the need for individual qubit addressability, and make use of existing bosonic manipulations. This offers an efficient route to realising our protocol on near-term quantum devices.

\section{Background}
\label{Sec:Background}

\noindent \textbf{Permutation-invariant codes.} We focus on PI codes, a class of quantum codes that are particularly effective for synchronisation error QEC. Proposed by Ruskai \cite{Ruskai_2000}, PI codes are contained within the symmetric space; that is, the subspace of $\mathbb{C}^{2 \otimes N}$ such that interchanging any pair of qubits leaves resultant states unchanged. The symmetric space is spanned by Dicke states $\Ket{D^N_k}$, where the Dicke state of weight $k$ is given by
\begin{equation}
    \label{Dicke state}
     \Ket{D^N_k} = \frac{1}{\sqrt{\binom{N}{k}}} \sum_{\substack{x \in \{0,1\}^N, \\ \text{wt}(x) = k}} \ket{x}.
\end{equation}
Such states can be viewed as normalised superpositions of permutations of $N$ spin-$\frac{1}{2}$ particles with $k$ spin-up particles and $N-k$ spin-down particles \cite{Ouyang_2014}. For example, the Dicke state $\ket{D^3_1}$ is given by
\begin{equation}
    \Ket{D^3_1} = \frac{\Ket{\uparrow \downarrow \downarrow} + \Ket{\downarrow \uparrow \downarrow} + \Ket{\downarrow \downarrow \uparrow}}{\sqrt{3}}. 
\end{equation}
There exists a wealth of literature on PI codes spanning more than two decades, cf. \cite{Ruskai_2000, Pollatsek_Ruskai_2004, Ouyang_2014, Ouyang_Fitzsimons_2016, Ouyang_2017, Hagiwara_Nakayama_Jan2020, Aydin_2024, Ouyang_Jing_Brennen_2025}.

\noindent The first step of any QEC protocol is encoding. For PI codes, we encode a single qubit $c_0 \Ket{0} + c_1 \Ket{1}$ into a symmetric $N$-qubit state using GPGs \cite{Johnsson_2020}, though one can also use a universal gate decomposition \cite{Plesch_2011, Bartschi_2019}. For $0 \leq k \leq N$,
\begin{equation}
    \label{N-qubit PI state}
    \Ket{\psi_N} = \sum_k \beta_k \Ket{D^{N}_k} \equiv c_0 \Ket{0_L} + c_1 \Ket{1_L},
\end{equation}
where $|c_0|^2 + |c_1|^2 = 1$ and $\Ket{x_L}$ are logical codewords for $x \in \{0,1\}$, given by a linear combination of Dicke states. We restrict ourselves to gnu codes henceforth \cite{Ouyang_2014}, an infinite family of PI codes on $N=gnu$ qubits. Here, $g, n \geq 2$ are the code gap and code occupancy respectively, and $u \geq 1$ is a scaling parameter that determines the code length \cite{Ouyang_2021}. The code distance is given by $\text{min}(g,n)$, with logical codewords
\begin{equation}
    \label{gnu logical codewords}
    \Ket{x_L} = 2^{-\frac{n-1}{2}} \sum_{i \equiv x \, (\text{mod} \, 2)} \sqrt{\binom{n}{i}} \, \Ket{D^N_{gi}}
\end{equation}
for $x \in \{0,1\}$ and $0 \leq i \leq nu$. The particular PI code depends on the choice of $\beta_k$ in \eqref{N-qubit PI state}; for gnu codes we have
\begin{equation}
    \label{Definition of beta}
    \beta_{gi} \coloneqq 
    \begin{dcases}
        2^{-\frac{n-1}{2}} c_0 \sqrt{\binom{n}{i}}& \text{if $i$ even}, \\
        2^{-\frac{n-1}{2}} c_1 \sqrt{\binom{n}{i}} & \text{if $i$ odd}, 
    \end{dcases}
\end{equation}
and $\beta_k = 0$ for all $k \neq gi$. Notable examples of gnu codes include the four-qubit code \cite{Hagiwara_Nakayama_Apr2020} with $g=n=2, u=1$ \cite{Ouyang_2021} and the Ruskai code \cite{Ruskai_2000} with $g=n=3, u=1$ \cite{Ouyang_2021}.

\noindent \textbf{Insertion errors.} Insertion errors occur when additional qubits appear amongst the logical state \eqref{N-qubit PI state}. We consider the case of a single insertion error only, where the qubit count evolves as $N \rightarrow N+1$. In this case, the inserted qubit can appear in one of $N+1$ possible positions, denoted by $a$ for $0 \leq a \leq N$, though this position is unknown. An illustrative diagram depicting such an error is given in Fig. \ref{Fig:Insertion error}. 

\noindent We represent the logical state \eqref{N-qubit PI state} as a spin-$\frac{N}{2}$ particle and the insertion error \eqref{Single insertion state} as a spin-$\frac{1}{2}$ particle, since both are two-level systems alike qubits. In this way, we can reframe an insertion error as an AM coupling problem, with the advantage of leveraging the laws of AM coupling, a well-understood phenomenon in quantum mechanics.

\noindent The logical state has maximal AM $j=\frac{N}{2}$ and magnetic numbers $m = \frac{N}{2}, \ldots, -\frac{N}{2}$. Now consider a single pure state insertion error on \eqref{N-qubit PI state}, with $j = \frac{1}{2}$ and $m = \pm\frac{1}{2}$, of the form
\begin{equation}
    \label{Single insertion state}
    \Ket{\psi_1} = \sum_m \alpha_m \ket{1/2, m} \equiv v_0 \Ket{0} + v_1 \Ket{1},
\end{equation} 
where $|v_0|^2 + |v_1|^2 = 1$. This results in a state with total AM $j=\frac{N \pm 1}{2}$ and magnetic numbers $m = \frac{N+1}{2}, \cdots, -\frac{N+1}{2}$, which correspond to the eigenvalues of $\hat{J}^z$ as defined in \eqref{Angular momentum operators}. Formally, we write the post-insertion state as $\Ket{\Psi^a} \coloneqq \pi_a \, \bigl( \Ket{\psi_N} \Ket{\psi_1} \bigr)$, where $\pi_a \in S_{N+1}$ is a permutation on $N+1$ qubits representing an insertion error at position $a$.

\begin{figure}
    \centering
    \includegraphics[scale=0.63]{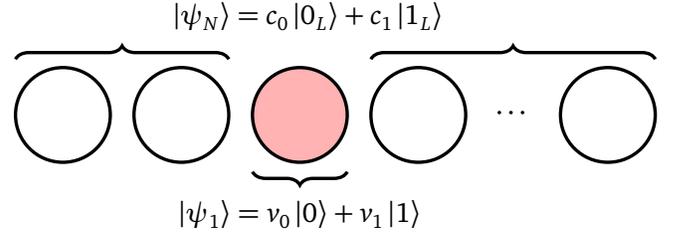}
    \caption[Insertion error]{Depiction of a single quantum insertion error $\ket{\psi_1}$ on the logical state $\ket{\psi_N}$ in position 2. Insertion position is labelled by the qubit directly to the left, with the exception of position 0.}
    \label{Fig:Insertion error}
\end{figure}

\section{Syndrome extraction}
\label{Sec:Syndrome extraction}

\noindent \textbf{Measurement of $\hat{J}^2$.} Syndrome extraction begins by measuring the total AM operator
\begin{equation}
    \label{J^2 operator}
    \hat{J}^2 = \left( \hat{J}^x \right)^2 + \left( \hat{J}^y \right)^2 + \left( \hat{J}^z \right)^2
\end{equation}
on the post-insertion state, where for $1 \leq \ell \leq N+1$,
\begin{equation}
    \label{Angular momentum operators}
    \hat{J}^x = \frac{1}{2} \sum_\ell X_\ell, \quad \hat{J}^y = \frac{1}{2} \sum_\ell Y_\ell, \quad \hat{J}^z = \frac{1}{2} \sum_\ell Z_\ell.
\end{equation}
Here, $X_\ell, Y_\ell, Z_\ell$ are the usual Pauli operators acting on the $\ell^{\text{th}}$ qubit. The projector onto the space with total AM $j$ is given by
\begin{equation}
    \label{J^2 projectors}
    \mathcal{P}_j = \sum_{m, p} \Ket{j,m}_p \Bra{j,m}_p,
\end{equation}
where $\big\{ \ket{j,m}_p \big\}_{j,m,p}$ is a sequentially coupled basis (SCB). We construct such an SCB by coupling the spins $j_i = \frac{1}{2}$ in the following manner
\begin{equation}
    \label{Coupling path}
    \left( j_1, j_2 \right) \rightarrow j_{[2]}, \, \left( j_{[2]}, j_3 \right) \rightarrow j_{[3]}, \, \cdots, \, \left( j_{[N]}, j_{N+1} \right) \rightarrow j,
\end{equation}
where $[i] \coloneqq \{1, \cdots, i\}$. We choose the SCB where we couple $j_i$ as in \eqref{Coupling path}, but one could construct an SCB via any other coupling path. Note that the subscript $p$ in \eqref{J^2 projectors} denotes the coupling path $p = (j_1, j_{[2]}, \cdots, j_{[N]}, j)$, a label necessary to distinguish between degenerate AM eigenstates for fixed $j$. There exist several other ways to represent this notion, including standard Young tableau, Yamanouchi symbols and binary trees, cf. \cite{Havlicek_Strelchuk_2019, Wills_Strelchuk_2024}.

\noindent After measuring the projector \eqref{J^2 projectors}, $\ket{\Psi^a}$ is projected onto the Schur-Weyl basis \cite{Bacon_Chuang_Harrow_2006}. More precisely, $\ket{\Psi^a_j} \in \text{span}\{ \mathcal{W}_{j,w} \}$ where $\mathcal{W}_{j,w} \coloneqq \left\{ \ket{j,m}_p \mid j = \frac{N \pm 1}{2}, -j \leq m \leq j \right\}$ is the subspace of $\mathbb{C}^{2 \otimes (N+1)}$ with total AM $j$. This is a direct consequence of Schur-Weyl duality  \cite{Harrow_2005}, which decomposes the Hilbert space into irreducible representations of the symmetric group $S_{N+1}$ and special unitary group $SU(2)$. Schur-Weyl duality has been routinely utilised in quantum information theory, cf. \cite{Havlicek_Strelchuk_2018, East_2023}. Here, we utilise the block-diagonal structure of the Hilbert space in the Schur-Weyl basis, since measurement dephases the post-insertion state.

\noindent \textbf{Modular measurement of $\hat{J}^z$.} Syndrome extraction proceeds by measuring $\hat{J}^z$ on the post-insertion state modulo the code gap $g$. The projector onto the subspace with magnetic number $m$ can be written as
\begin{equation}
    \label{Projector onto subspace}
    \mathcal{Q}_m = \sum_w \mathcal{P}^{w}_j, 
\end{equation}
where $0 \leq w \leq g-1$ and
\begin{equation}
    \label{General projector}
    \mathcal{P}^w_j = \sum_p \Pi^{w}_{j, p}.
\end{equation}
For a given $p$ and $0 \leq i \leq \lfloor \frac{2j-w}{g} \rfloor$, the projector
\begin{equation}
    \label{Projector onto specific spin path}
    \Pi^w_{j, p} = \sum_i \Ket{j, gi+w-j}_p \Bra{j, gi+w-j}_p
\end{equation}
projects onto $\text{span} \{ \ket{j, m}_p \mid j+m \equiv w \,\, (\text{mod} \, g) \}$. Thus for all $p$, $\mathcal{P}^w_j$ projects onto the space where the sum of the total AM and magnetic numbers are congruent to $w$ modulo $g$.

\noindent \textbf{Syndrome extraction protocol.} We now detail a two-stage syndrome extraction protocol that yields two `bits' of classical information, $(j,w)$. The syndrome is effectively two bits since $j$ takes only two values, and although $w$ takes a possible $g$ values, we find that only two such values lead to a non-zero projection. We measure $\hat{J}^2$ on the post-insertion state $\ket{\Psi^a}$ to obtain $j=\frac{N \pm 1}{2}$, and (unnormalised) post-measurement states $\ket{\tilde \Psi^a_j} \coloneqq \mathcal{P}_j \Ket{\Psi^a}$. We then measure $\hat{J}^z$ (mod $g$) on the normalised state $\ket{\Psi^a_j} = \ket{\tilde \Psi^a_j} / {\sqrt{\braket{\tilde \Psi^a_j | \tilde \Psi^a_j}}}$ to obtain $w=0, \ldots, g-1$ and (unnormalised) post-measurement states $\ket{\tilde \Psi^{a,w}_j} \coloneqq \mathcal{P}^w_j \ket{\Psi^a_j}$. As a result, we obtain the projected states $\ket{\Psi^{a,w}_j} = \ket{\tilde \Psi^{a,w}_j} / {\sqrt{\braket{\tilde \Psi^{a,w}_j | \tilde \Psi^{a,w}_j}}}$ and one of four possible syndromes 
\begin{align}
    \notag \left( j=\frac{N+1}{2}, \, w=0 \right), \quad &\left( j=\frac{N+1}{2}, \, w=1 \right), \\
    \left( j=\frac{N-1}{2}, \, w=0 \right), \quad &\left( j=\frac{N-1}{2}, \, w=g-1 \right).
\end{align}
A visual summary of our syndrome extraction protocol is given in Fig. \ref{Fig:Syndrome extraction}.

\begin{figure}
    \centering
    \includegraphics[scale=0.63]{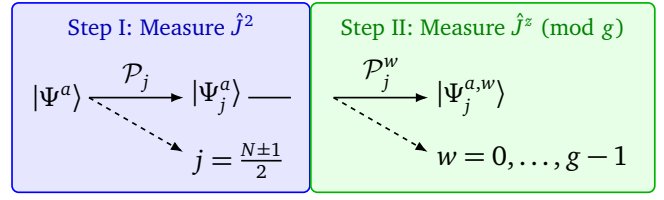}
    \caption[Syndrome extraction]{Summary of our two-stage syndrome extraction protocol via an AM approach. We measure $\hat{J}^2$ on $\ket{\Psi^a}$ to obtain $j=\frac{N \pm 1}{2}$, and $\ket{\Psi^a_j} \propto \mathcal{P}_j \Ket{\Psi^a}$. We then measure $\hat{J}^z$ (mod $g$) on $\ket{\Psi^a_j}$ to obtain $w=0, \ldots, g-1$ and $\ket{\Psi^{a,w}_j} \propto \mathcal{P}^w_j \ket{\Psi^a_j}$. Such measurements can be efficiently implemented using GPGs \cite{Johnsson_2020}.}
    \label{Fig:Syndrome extraction}
\end{figure}

\section{Main result} 
\label{Sec:Main result}

\noindent We now explain how the above syndrome extraction protocol allows for insertion error QEC on PI codes. Suppose we measure $\hat{J}^2$ and obtain $j = \frac{N+1}{2}$, i.e. we project onto the symmetric space. Then measuring the projectors $\mathcal{P}^w_\frac{N+1}{2}$ on $\ket{\Psi_\frac{N+1}{2}}$ for $w = 0, \ldots, g-1$ yields
\begin{equation*}
    \label{Symmetric projection}
    \Ket{\tilde \Psi^w_\frac{N+1}{2}} =
    \begin{dcases}
        \tilde v_0' \, \left( c_0 \Ket{\tilde 0^0_\frac{N+1}{2}} + c_1 \Ket{\tilde 1^0_\frac{N+1}{2}} \right), & w=0; \\ 
        \tilde v_1' \, \left( c_0 \Ket{\tilde 0^1_\frac{N+1}{2}} + c_1 \Ket{\tilde 1^1_\frac{N+1}{2}} \right), & w=1; \\
        0, & 2 \leq w \leq g-1,
    \end{dcases}
\end{equation*}
where the probability of projecting onto a particular $w$, $|\tilde v_w'|^2$, is proportional to the probability of a $\ket{w}$-insertion, $|v_w|^2$. Here, $\big\{ \ket{\tilde 0^w_{\frac{N+1}{2}}}, \ket{\tilde 1^w_{\frac{N+1}{2}}} \big\}$ are orthogonal for $w \in \{0,1\}$, and are similar to the logical codewords of \eqref{gnu logical codewords} but on one additional qubit and with the binomial-type coefficients \eqref{Definition of beta} rescaled by CG coefficients. We observe a similar phenomenon for the mixed symmetry space with $j=\frac{N-1}{2}$,
\begin{equation*}
    \label{Mixed symmetry projection}
    \Ket{\tilde \Psi^{a,w}_\frac{N-1}{2}} =
    \begin{dcases}
        \tilde v_{a,1} \left( c_0 \Ket{\tilde 0^{a,0}_\frac{N-1}{2}} + c_1 \Ket{\tilde 1^{a,0}_\frac{N-1}{2}} \right), & w=0; \\
        0,  &1 \leq w \leq g-2; \\
        \tilde v_{a,0} \left( c_0 \Ket{\tilde 0^{a,g-1}_\frac{N-1}{2}} + c_1 \Ket{\tilde 1^{a,g-1}_\frac{N-1}{2}} \right), & w=g-1. \\
    \end{dcases}
\end{equation*}
As with the symmetric space, $|\tilde v_{a,w}|^2$ are proportional to $|v_w|^2$, and $\ket{\tilde x^{a,w}_{\frac{N-1}{2}}}$ are similar in structure to $\ket{\tilde x^w_{\frac{N+1}{2}}}$ for $x \in \{0,1\}$, but with more complicated rescaling. These projections can be obtained by first expressing $\ket{\Psi^a}$ in the SCB given by \eqref{Coupling path}, with the help of Clebsch-Gordan (CG) coefficients. One can then derive the post-measurement states after measuring $\hat{J}^2$ on $\ket{\Psi^a}$, followed by $\hat{J}^z$ modulo the code gap $g$ on $\ket{\Psi^a_j}$. Full details can be found in the \ifthenelse{\boolean{arxiv}}{Appendix}{Supplemental Material \cite{supp}}.

\noindent Upon inspection, we cannot identify $\ket{\tilde x^{a,w}_j}$ as logical codewords since they are unnormalised; furthermore, they do not necessarily have the same norm. Remarkably, we show in Lem. \ref{Norm-preserving lemma} that the norms of $\ket{\tilde x^{a,w}_j}$ are in fact equal. \newline

\begin{lemma}  
    \label{Norm-preserving lemma}
    Let $j=\frac{N \pm 1}{2}$ and $w=0, \ldots, g-1$ be fixed. Then the inner products $\braket{\tilde x^{a,w}_j | \tilde x^{a,w}_j}$ are equal for $x \in \{0,1\}$.
\end{lemma}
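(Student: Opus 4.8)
The plan is to exploit the permutation symmetry of the syndrome operators to strip the insertion position $a$ out of the norm entirely, thereby reducing both the symmetric ($j=\frac{N+1}{2}$) and mixed-symmetry ($j=\frac{N-1}{2}$) cases to a single Clebsch-Gordan computation followed by an elementary binomial identity.

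First I would write $\Ket{\Psi^a}=\pi_a\bigl(\Ket{\psi_N}\Ket{\psi_1}\bigr)$ and recall from \eqref{N-qubit PI state} and \eqref{gnu logical codewords} that $\Ket{\psi_N}=c_0\Ket{0_L}+c_1\Ket{1_L}$, with $\Ket{x_L}$ supported on the Dicke states $\Ket{D^N_{gi}}$ of even (for $x=0$) or odd (for $x=1$) index $i$. The key observation is that $\mathcal{P}_j$ and $\mathcal{P}^w_j$ are spectral projectors of $\hat{J}^2$ and $\hat{J}^z$, both of which are symmetric under any relabelling of qubits; hence they commute with $\pi_a$. Since $\pi_a$ is unitary, this lets me write $\Ket{\tilde x^{a,w}_j}\propto\pi_a\,\mathcal{P}^w_j\bigl(\Ket{x_L}\Ket{\psi_1}\bigr)$ with a proportionality constant common to $x\in\{0,1\}$ (absorbing the first-measurement normalisation and the inserted amplitude $v_w$), so that
\begin{equation*}
    \braket{\tilde x^{a,w}_j|\tilde x^{a,w}_j}\;\propto\;\bigl\lVert \mathcal{P}^w_j\bigl(\Ket{x_L}\Ket{\psi_1}\bigr)\bigr\rVert^2,
\end{equation*}
a quantity that no longer depends on $a$.

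Next I would evaluate the right-hand side by coupling the inserted qubit to the maximal-spin state of the first $N$ qubits. Each Dicke component $\Ket{D^N_{gi}}=\Ket{\tfrac N2,\,gi-\tfrac N2}$ couples with $\Ket{\tfrac12,m_s}$ to a single state $\Ket{j,m_i}$ via a spin-$\tfrac12$ Clebsch-Gordan coefficient $C^{\,j,m_i}$. For a fixed syndrome $(j,w)$ the congruence $j+m\equiv w\ (\mathrm{mod}\ g)$ selects exactly one value of $m_s$, and distinct $i$ produce distinct magnetic numbers $m_i$ (which $\pi_a$ preserves, since it commutes with $\hat{J}^z$); the resulting coupled states are therefore orthonormal and no cross terms arise. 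This yields
\begin{equation*}
    \braket{\tilde x^{a,w}_j|\tilde x^{a,w}_j}\;\propto\;2^{-(n-1)}\sum_{i\equiv x\,(\mathrm{mod}\,2)}\binom{n}{i}\,\bigl(C^{\,j,m_i}\bigr)^2 .
\end{equation*}
The crucial structural fact is that in each of the four surviving syndromes the squared Clebsch-Gordan coefficient is an \emph{affine} function of $i$, namely one of $\tfrac{N-gi+1}{N+1}$, $\tfrac{gi+1}{N+1}$, $\tfrac{N-gi}{N+1}$, $\tfrac{gi}{N+1}$.

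Finally I would close the argument with the two binomial moment identities
\begin{equation*}
    \sum_{i\ \mathrm{even}}\binom ni=\sum_{i\ \mathrm{odd}}\binom ni,\qquad \sum_{i\ \mathrm{even}}i\binom ni=\sum_{i\ \mathrm{odd}}i\binom ni,
\end{equation*}
valid for $n\geq2$ and obtained by evaluating $(1+t)^n$ and its derivative at $t=\pm1$. Because the summand is $\binom ni$ times an affine function of $i$, and both the zeroth and first binomial moments agree between even and odd $i$, the sums for $x=0$ and $x=1$ coincide, proving the claim. I expect the main obstacle to be the mixed-symmetry case, which at first sight is genuinely $a$-dependent with complicated rescaling; the payoff of the commutation argument is precisely that it collapses this case onto the same affine-in-$i$ template as the symmetric case, after which the elementary moment identities (and the standing assumption $n\geq2$ for gnu codes) finish the proof.
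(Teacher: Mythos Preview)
Your argument is correct and takes a genuinely different route from the paper. The paper's proof is a direct computation that treats the symmetric and mixed-symmetry spaces separately, and within the latter further splits into the sub-cases $a=0$ and $a=1,\ldots,N$. For $a\geq 1$ the paper carries the Vandermonde decomposition $\ket{D^N_k}\to\sum_l\ket{D^a_l}\ket{D^{N-a}_{k-l}}$ through the two-step coupling, obtains a double sum over $l,l'$, and then invokes O'Hara's theorem, a CG recursion, and the Chu--Vandermonde identity to collapse $f_a(i)$ to $\tfrac{a}{a+1}\binom{N-1}{gi}$ before reaching the same affine-in-$i$ summand you obtain directly. Your observation that $\mathcal{P}_j$ and $\mathcal{P}^w_j$ are spectral projectors of permutation-invariant operators, hence commute with $\pi_a$, is exactly what short-circuits this machinery: it reduces every insertion position to the single ``couple the last qubit'' picture, so that only the elementary spin-$\tfrac12$ CG coefficients (and the $n\geq 2$ binomial moment identities, which coincide with the paper's appeal to \cite[Eq.~(12)]{Ouyang_2014}) are needed. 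The paper's longer calculation does yield the explicit norms $\gamma_w$ and $\gamma_{a,w}$ as a by-product, which feed into the measurement-probability discussion elsewhere; your approach proves the lemma more cleanly but would need a small extra step to extract those constants if they were required.
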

\begin{proof}
    Omitted for brevity.
\end{proof}

\noindent A proof of Lem. \ref{Norm-preserving lemma} can be found in the \ifthenelse{\boolean{arxiv}}{Appendix}{Supplemental Material \cite{supp}}. Our proof utilises \cite[Thm. 1]{O_Hara_2001} to express certain CG coefficients in terms of binomial coefficients, a recursion relation for CG coefficients \cite[Eq. (3.369)]{Sakurai_Napolitano_2020}, a well-known combinatorial identity by Vandermonde and results for binomial sums, including \cite[Lem. 1]{Ouyang_2014}. Lem. \ref{Norm-preserving lemma} above implies that for fixed $w$ and $x \in \{0,1\}$, \textbf{the normalised states $\ket{x^{a,w}_j} = \ket{\tilde x^{a,w}_j} / \sqrt{\braket{\tilde x^{a,w}_j | \tilde x^{a,w}_j}} $ form an orthonormal basis for the codespace of a quantum code on $N+1$ qubits}. Thus $\big\{ \ket{0^w_{\frac{N+1}{2}}}, \ket{1^w_{\frac{N+1}{2}}} \big\}$ represent the logical codewords of a PI code, and $\big\{ \ket{0^{a,w}_{\frac{N-1}{2}}}, \ket{1^{a,w}_{\frac{N-1}{2}}} \big\}$ represent the logical codewords of a spin code. We provide an example of our protocol below. \newline

\begin{example}
    The four-qubit code \cite{Hagiwara_Nakayama_Apr2020} is a gnu code with $g=n=2$, $u=1$ \cite{Ouyang_2021}. Thus the logical state is given by
    \begin{equation}
        \ket{\psi_4} = c_0 \left( \frac{\ket{D^4_0} + \ket{D^4_4}}{\sqrt{2}} \right) + c_1 \ket{D^4_2}.
    \end{equation}
    A single insertion error occurs on $\ket{\psi_4}$ and we obtain the post-insertion state $\ket{\Psi}$. If we measure $\hat{J}^2$, $\hat{J}^z$ (mod 2) and acquire the syndrome $\left( \frac{5}{2}, w \right)$ for $w \in \{0,1\}$, we obtain
    \begin{align}
        \notag \Ket{\Psi^0_{5/2}} &= c_0 \left( \frac{\sqrt{5}\ket{D^5_0} + \ket{D^5_4}}{\sqrt{6}} \right) + c_1 \ket{D^5_2}, \\
        \Ket{\Psi^1_{5/2}} &= c_0 \left( \frac{\ket{D^5_1} + \sqrt{5}\ket{D^5_5}}{\sqrt{6}} \right) + c_1 \ket{D^5_3}.
    \end{align}
    On the other hand, suppose we measure $\hat{J}^2$, $\hat{J}^z$ (mod 2) and acquire the syndrome $\left( \frac{3}{2}, w \right)$ for $w \in \{0,1\}$. Then for all $a=0, \ldots, 4$,
    \begin{align}
        \notag \Ket{\Psi^{a,0}_{3/2}} &= \sum_p d_{a,p} \left( c_0 \Ket{\frac{3}{2}, \frac{3}{2}}_p + c_1\Ket{\frac{3}{2}, -\frac{1}{2}}_p \, \right), \\
        \Ket{\Psi^{a,1}_{3/2}} &= \sum_p d_{a,p} \left( c_0 \Ket{\frac{3}{2}, -\frac{3}{2}}_p + c_1 \Ket{\frac{3}{2}, \frac{1}{2}}_p \, \right),
    \end{align}
    for some $d_{a,p} \in \mathbb{C}$ such that $\sum_p |d_{a,p}|^2 = 1$.
\end{example}

\section{Recovery} 
\label{Sec:Recovery}

\noindent We now discuss how to implement QEC for single insertion errors on gnu codes. There are a number of possible approaches to map the projected state $\ket{\Psi^{a,w}_j}$ to a PI code; for example, using techniques described in \cite{Ouyang_Brennen_2025}, such as the Gram-Schmidt procedure, quantum Schur transform or teleportation. We choose to map the state back to the symmetric space via teleportation, with the help of a PI ancilla as in \cite{Ouyang_Brennen_2025}. To do this, we must implement a conditional logical-X gate on the code, which is possible for odd code gaps $g$. We first define such a logical-X gate 
\begin{equation}
    \label{Logical X}
    X_L: \ket{j, m}_p \rightarrow \ket{j, -m}_p,
\end{equation}
as well as the logical-CNOT gate 
\begin{equation}
    \label{Logical CNOT}
    C_A X_B: \ket{j, m}_p \ket{j, m'}_{p'} \rightarrow \ket{j, m}_p (X_L)^m \ket{j, m'}_{p'}.
\end{equation}
Physically, the logical gates \eqref{Logical X}, \eqref{Logical CNOT} can be implemented using GPGs \cite{Johnsson_2020}. 

\noindent Our teleportation protocol proceeds as follows. In register A, we prepare a logical ancilla $\ket{+_L}$ in a gnu code with the same gap $g$ as $\ket{\psi_N}$; meanwhile, we prepare the spin code $\ket{\Psi^{a,w}_j}$ in register B. We then implement the logical-CNOT gate $C_A X_B$ between the two codes, with control on register A and target on register B. Next, we measure register B in the logical-Z basis of the new codespace, that is, $\text{span} \big\{ \ket{0^{a,w}_j}\bra{0^{a,w}_j}, \ket{1^{a,w}_j}\bra{1^{a,w}_j} \big\}$. Finally, we implement the logical-X gate $X_L$ on register A, conditional on the classical measurement outcome of register B. The effect of this protocol is to teleport the spin code to a PI code on the desired number of qubits. A quantum circuit is provided in Fig. \ref{Fig:Teleportation protocol}, which is a simple adaptation of \cite[Eq. 7]{Zhou_2000}.

\begin{figure}
    \centering
    \includegraphics[scale=1.35]{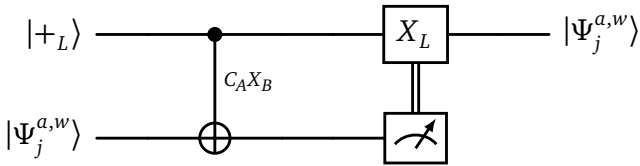}
    \caption[Teleportation circuit]{Quantum circuit for our teleportation protocol. The logical controlled-NOT gate $C_A X_B$ acts identically on $\ket{\Psi^{a,w}_j}$ for all $a=0, \ldots, N$, and can be implemented using GPGs \cite{Johnsson_2020}.}
    \label{Fig:Teleportation protocol}
\end{figure}

\noindent One benefit of our QEC protocol is that recovery has great flexibility. If we measure $\hat{J}^2$ and obtain $j = \frac{N+1}{2}$, then the projected state remains in the symmetric space on $N+1$ qubits. We can thus map this state to a PI code with better QEC properties via some unitary. Such a unitary exists by the Knill-Laflamme QEC criterion \cite{Knill_Laflamme_1996}, and can be implemented via GPGs \cite{Johnsson_2020}. Alternatively, we can first utilise our teleportation protocol to map the projected state back to the symmetric space on $N$ qubits, and then apply a unitary to map the resulting state to a PI code on $N$ qubits. On the other hand, if we measure $\hat{J}^2$ and obtain $j = \frac{N-1}{2}$, we can teleport the projected state to the symmetric space on either $N$ or $N+1$ qubits, before mapping to a PI code. Naturally, such flexibility is a useful feature of our protocol.

\section{Discussions \& conclusions}
\label{Sec:Discussions & conclusions}

\noindent In this paper, we provide a general framework to correct single insertion errors on gnu codes via an AM approach. We propose a two-stage syndrome extraction protocol via projective measurements of total AM and its projection along the $z$-axis modulo the code gap $g$, which yields a two-bit syndrome $(j,w)$. We demonstrate that the projected state is mapped to a new quantum code, and we provide an example of our QEC protocol on the four-qubit code \cite{Hagiwara_Nakayama_Apr2020}. Finally, we detail a teleportation protocol to map this state back to a PI code on the desired number of qubits.

\noindent There is plenty of scope for future research on quantum insertion errors. One fruitful avenue is to extend our work to other PI codes, such as Aydin's $(g, m, \delta, \varepsilon)$-codes \cite{Aydin_2024}, Ouyang's $(b,g,m)$-codes \cite{Ouyang_Jing_Brennen_2025} and more. Another avenue is to consider $t$ insertion errors for $t>1$, a venture that is largely absent from the literature. This would increase the difficulty of an already challenging problem, but progress would be significant for synchronisation error QEC.

\section{Acknowledgements}
\label{Sec:Acknowledgements}

\noindent L.B. acknowledges support from EPSRC under grant number EP/W524360/1. Y.O. acknowledges support from EPSRC under grant number EP/W028115/1, and the EPSRC-funded QCI3 Hub (grant number EP/Z53318X/1.)

\onecolumngrid
\section*{Appendix}
\label{Appendix}
\noindent \textbf{Post-insertion state.} Here, we derive an expression for the post-insertion state $\ket{\Psi^a}$ in the sequentially coupled basis \eqref{Coupling path}. In what follows, we consider the insertion positions $a=0$ and $a=1, \ldots, N$ separately. For $a=0$, we have $\ket{\Psi^0} = \ket{\psi_1} \ket{\psi_N}$, that is,
\begin{equation}
    \ket{\Psi^0} = \sum_{k,m} \alpha_m \beta_k \, \ket{1/2, m} \ket{N/2, k-N/2}
\end{equation}
for $0 \leq k \leq N$ and $m=\pm\frac{1}{2}$. By angular momenta coupling laws, we obtain
\begin{equation}
    \ket{\Psi^0} = \sum_{j,k,m} \alpha_m \beta_k C^{j, k+m-\frac{N}{2}}_{\frac{1}{2}, m; \frac{N}{2}, k-\frac{N}{2}} \ket{j, k+m-N/2}_0,
\end{equation}
where $j=\frac{N \pm 1}{2}$. Here, the subscript zero emphasises that $\ket{j, k+m-N/2}_0$ depends on position $a=0$. This basis is non-orthogonal, so we move to the Schur-Weyl basis \cite{Bacon_Chuang_Harrow_2006}
\begin{equation}
    \label{Change of basis}
    \Ket{j, k+m-N/2}_0 = \sum_p d_{0,p} \Ket{j, k+m-N/2}_p,
\end{equation}
where $1 \leq p \leq N$ and $\sum_p |d_{0,p}|^2 = 1$. Thus the post-insertion state is
\begin{equation}
    \label{Post-insertion state for a=0}
    \ket{\Psi^0} = \sum_{j,k,m,p} d_{0,p} \alpha_m \beta_k C^{j, k+m-\frac{N}{2}}_{\frac{1}{2}, m; \frac{N}{2}, k-\frac{N}{2}} \ket{j, k+m-N/2}_p.
\end{equation}
For $a=1, \ldots, N$, we first write the logical state \eqref{N-qubit PI state} in terms of unnormalised Dicke states $\ket{H^N_k} \coloneqq \sqrt{\binom{N}{k}} \, \ket{D^N_k}$
\begin{equation}
    \label{Unnormalised N-qubit state}
    \ket{\psi_N} \, = \sum_k \frac{\beta_k}{\sqrt{\binom{N}{k}}} \ket{H^N_k},
\end{equation}
where $0 \leq k \leq N$. By the well-known Vandermonde identity
\begin{equation}
    \label{Vandermonde breakdown}
    \ket{H^N_k} = \sum_{l \in \mathcal{A}_k} \ket{H^a_l} \ket{H^{N-a}_{k-l}}, 
\end{equation} 
where $\mathcal{A}_k \coloneqq \{ l: \text{max} (0, k+a-N) \leq l \leq \text{min} (a, k) \}$. Replacing the unnormalised Dicke states in \eqref{Vandermonde breakdown} with their normalised counterparts and substituting back into \eqref{Unnormalised N-qubit state} yields
\begin{equation}   
    \label{Decomposed logical state}
    \ket{\psi_N} \, = \sum_{k, l \in \mathcal{A}_k} \beta_{k,l} \ket{D^a_l} \ket{D^{N-a}_{k-l}},
\end{equation}
where
\begin{equation}
    \label{Beta k,l}
    \beta_{k,l} \coloneqq \beta_k \sqrt{\frac{\binom{a}{l} \binom{N-a}{k-l}}{\binom{N}{k}}}.
\end{equation}
Now consider the insertion error \eqref{Single insertion state} on \eqref{Decomposed logical state}. Since $\Ket{\Psi^a} \coloneqq \pi_a \, \big( \Ket{\psi_N} \Ket{\psi_1} \big)$, there exists a permutation $\pi_a \in S_{N+1}$ such that 
\begin{equation}
    \label{Inserted state}
    \ket{\Psi^a} \, = \sum_{k, l \in \mathcal{A}_k} \beta_{k,l} \ket{D^a_l} \ket{\psi_1} \ket{D^{N-a}_{k-l}}.
\end{equation}
We proceed by coupling the states in \eqref{Inserted state} in turn. First
\begin{equation}
    \ket{D^a_l} \ket{\psi_1} = \sum_{j',m} \alpha_m C^{j', l+m-\frac{a}{2}}_{\frac{a}{2}, l-\frac{a}{2}; \frac{1}{2}, m} \Ket{j', l+m-a/2}, 
\end{equation}
where $j' = \frac{a \pm 1}{2}$. Then \eqref{Inserted state} becomes
\begin{equation}
    \Ket{\Psi^a} \, = \sum_{\substack{j', k, m, \\ l \in \mathcal{A}_k}} \alpha_m \beta_{k, l} \, C^{j', l+m-\frac{a}{2}}_{\frac{a}{2}, l-\frac{a}{2}; \frac{1}{2}, m} \Ket{j', l+m-a/2} \Ket{D^{N-a}_{k-l}}.
\end{equation}
Similarly, 
\begin{equation}
    \Ket{j', l+m-a/2} \Ket{D^{N-a}_{k-l}} = \sum_j C^{j, k+m-\frac{N}{2}}_{j', l+m-\frac{a}{2}; \frac{N-a}{2}, k-l-\frac{N-a}{2}} \\ \Ket{j, k+m-N/2}_a,
\end{equation}
where $j = \frac{N \pm 1}{2}$. As before, the subscript $a$ denotes that $\Ket{j, k+m-N/2}_a$ implicitly depends on positions $a=1, \ldots, N$. Now \eqref{Inserted state} becomes
\begin{equation}
    \Ket{\Psi^a} \, = \sum_{\substack{j',j,k,m, \\ l \in \mathcal{A}_k}} \alpha_m \beta_{k, l} \, C^{j', l+m-\frac{a}{2}}_{\frac{a}{2}, l-\frac{a}{2}; \frac{1}{2}, m} C^{j, k+m-\frac{N}{2}}_{j', l+m-\frac{a}{2}; \frac{N-a}{2}, k-l-\frac{N-a}{2}} \\ \Ket{j, k+m-N/2}_a.
\end{equation}
Performing the same change of basis as in \eqref{Change of basis} yields the post-insertion state
\begin{equation}
    \label{Post-insertion state}
    \Ket{\Psi^a} \, = \sum_{\substack{j',j,k,m, \\ p, l \in \mathcal{A}_k}} d_{a,p} \alpha_m \beta_{k, l} \, C^{j', l+m-\frac{a}{2}}_{\frac{a}{2}, l-\frac{a}{2}; \frac{1}{2}, m} C^{j, k+m-\frac{N}{2}}_{j', l+m-\frac{a}{2}; \frac{N-a}{2}, k-l-\frac{N-a}{2}} \\ \Ket{j, k+m-N/2}_p,
\end{equation}
where $\sum_p |d_{a,p}|^2 = 1$.

\noindent \textbf{Measurement of $\hat{J}^2$.} Here, we derive the post-measurement states after measuring $\hat{J}^2$ on $\Ket{\Psi^a}$. For $a=0$, applying the projector \eqref{J^2 projectors} on \eqref{Post-insertion state for a=0} yields the (unnormalised) post-measurement state 
\begin{equation}
    \label{Unnormalised post-measurement state for a=0}
    \ket{\tilde \Psi^0_j} = \sum_p d_{0,p} \sum_m \alpha_m \sum_k \beta_k \, C^{j, k+m-\frac{N}{2}}_{\frac{N}{2}, k-\frac{N}{2}; \frac{1}{2}, m} \Ket{j, k+m-N/2}_p.
\end{equation}
One can directly show that \eqref{Unnormalised post-measurement state for a=0} has squared norm
\begin{equation}
    \label{Norm after measuring J^2 for a=0}
    \braket{\tilde \Psi^0_j | \tilde \Psi^0_j} = \sum_{m,m'} \alpha^*_m \alpha_{m'} \sum_k \beta^*_k \beta_{k+m-m'} C^{j, k+m-\frac{N}{2}}_{\frac{N}{2}, k-\frac{N}{2}; \frac{1}{2}, m} C^{j, k+m-\frac{N}{2}}_{\frac{N}{2}, k+m-m'-\frac{N}{2}; \frac{1}{2}, m'},
\end{equation}
where we used that $\sum_p |d_{0,p}|^2 = 1$. On the other hand, for $a=1, \ldots, N$, applying \eqref{J^2 projectors} on \eqref{Post-insertion state} yields the (unnormalised) post-measurement state 
\begin{equation}
    \label{Unnormalised post-measurement state}
    \ket{\tilde \Psi^a_j} \, = \sum_p d_{a,p} \sum_m \alpha_m \sum_k \sum_{l \in \mathcal{A}_k} \beta_{k,l} \, C^{j', l+m-\frac{a}{2}}_{\frac{a}{2}, l-\frac{a}{2}; \frac{1}{2}, m} C^{j, k+m-\frac{N}{2}}_{j', l+m-\frac{a}{2}; \frac{N-a}{2}, k-l-\frac{N-a}{2}} \\ \Ket{j, k+m-N/2}_p.
\end{equation}
In this case, \eqref{Unnormalised post-measurement state} has squared norm 
\begin{multline}
    \label{Norm after measuring J^2}
    \braket{\tilde \Psi^a_j | \tilde \Psi^a_j} = \sum_{m,m'} \alpha^*_m \alpha_{m'} \sum_k \sum_{l \in \mathcal{A}_k} \beta^*_{k,l} C^{j', l+m-\frac{a}{2}}_{\frac{a}{2}, l-\frac{a}{2}; \frac{1}{2}, m} C^{j, k+m-\frac{N}{2}}_{j', l+m-\frac{a}{2}; \frac{N-a}{2}, k-l-\frac{N-a}{2}} \\
    \sum_{l' \in \mathcal{A}_{k+m-m'}} \beta_{k+m-m',l'} C^{j', l'+m'-\frac{a}{2}}_{\frac{a}{2}, l'-\frac{a}{2}; \frac{1}{2}, m'} C^{j, k+m-\frac{N}{2}}_{j', l'+m'-\frac{a}{2}; \frac{N-a}{2}, k+m-m'-l'-\frac{N-a}{2}},
\end{multline}
where again where we used that $\sum_p |d_{a,p}|^2 = 1$.

\noindent \emph{Symmetric space.} For $j=\frac{N+1}{2}$, the insertion position is inconsequential, so we drop the superscript $a$ and work with the simpler expression for the squared norm from \eqref{Norm after measuring J^2 for a=0}. Explicitly, we have
\begin{equation}
    \Braket{\tilde \Psi_\frac{N+1}{2} | \tilde \Psi_\frac{N+1}{2}} = \sum_{m,m'} \alpha^*_m \alpha_{m'} \sum_k \beta^*_k \beta_{k+m-m'} C^{\frac{N+1}{2}, k+m-\frac{N}{2}}_{\frac{N}{2}, k-\frac{N}{2}; \frac{1}{2}, m} C^{\frac{N+1}{2}, k+m-\frac{N}{2}}_{\frac{N}{2}, k+m-m'-\frac{N}{2}; \frac{1}{2}, m'}.
\end{equation}
By O'Hara's theorem \cite[Thm. 1]{O_Hara_2001}, 
\begin{equation}
    C^{\frac{N+1}{2}, k+m-\frac{N}{2}}_{\frac{N}{2}, k-\frac{N}{2}; \frac{1}{2}, m} = \sqrt{\frac{\binom{N}{k}}{\binom{N+1}{k+m+\frac{1}{2}}}}, \qquad C^{\frac{N+1}{2}, k+m-\frac{N}{2}}_{\frac{N}{2}, k+m-m'-\frac{N}{2}; \frac{1}{2}, m'} = \sqrt{\frac{\binom{N}{k+m-m'}}{\binom{N+1}{k+m+\frac{1}{2}}}},
\end{equation}
and thus
\begin{equation}
    \Braket{\tilde \Psi_\frac{N+1}{2} | \tilde \Psi_\frac{N+1}{2}} = \sum_{m,m'} \alpha^*_m \alpha_{m'} \sum_k \beta^*_k \, \beta_{k+m-m'} \frac{\sqrt{\binom{N}{k} \binom{N}{k+m-m'}}}{\binom{N+1}{k+m+\frac{1}{2}}}.
\end{equation}
For gnu codes, coefficients logical codewords are gapped by a distance of $g$; that is, $\beta_k \neq 0$ for $k=gi$ only. Thus reindexing $k \rightarrow gi$ for $0 \leq i \leq nu$ and expanding the sum over $m,m'$ gives
\begin{equation}
    \Braket{\tilde \Psi_\frac{N+1}{2} | \tilde \Psi_\frac{N+1}{2}} = \frac{1}{N+1} \sum_i \left( |\alpha_\frac{1}{2}|^2 |\beta_{gi}|^2 \, (gi+1) \right. \\
    + \left. |\alpha_{-\frac{1}{2}}|^2 |\beta_{gi}|^2 \, (N+1-gi) \right).
\end{equation}
Splitting the above summation over even/odd $i$ and substituting the state coefficients $\alpha_m, \beta_{gi}$ from \eqref{Definition of beta} gives
\begin{multline}
    \label{Norm involving binomial sums}
    \Braket{\tilde \Psi_\frac{N+1}{2} | \tilde \Psi_\frac{N+1}{2}} = \frac{2^{-(n-1)}}{N+1} \left( |v_1|^2 |c_0|^2 \sum_{i \, \text{even}} \binom{n}{i} \, (gi+1) + |v_1|^2 |c_1|^2 \sum_{i \, \text{odd}} \binom{n}{i} \, (gi+1) \right. \\
    + \left. |v_0|^2 |c_0|^2 \sum_{i \, \text{even}} \binom{n}{i} \, (N+1-gi) + |v_0|^2 |c_1|^2 \sum_{i \, \text{odd}} \binom{n}{i} \, (N+1-gi) \right).
\end{multline}
This consists of a linear combination of the well-known binomial sums
\begin{equation}
    \label{Binomial sums}
    \sum_{\substack{i \in \{0,1\} \\ (\text{mod} \, 2)}} \binom{n}{i} = 2^{n-1}, \qquad \sum_{\substack{i \in \{0,1\} \\ (\text{mod} \, 2)}} i \, \binom{n}{i} = n \, 2^{n-2}.
\end{equation}
Using \eqref{Binomial sums} and the fact that $|c_0|^2 + |c_1|^2 = 1$, \eqref{Norm involving binomial sums} reduces to
\begin{equation} 
    \Braket{\tilde \Psi_\frac{N+1}{2} | \tilde \Psi_\frac{N+1}{2}} = |v_0|^2 \left( 1-\frac{1}{2u}\frac{N}{N+1} \right) + |v_1|^2 \left( \frac{1}{N+1} + \frac{1}{2u}\frac{N}{N+1} \right).
\end{equation}
Thus the post-measurement state is given by
\begin{equation}
    \Ket{\Psi_\frac{N+1}{2}} = \frac{\Ket{\tilde \Psi_\frac{N+1}{2}}}{\sqrt{|v_0|^2 \, \gamma_0 + |v_1|^2 \, \gamma_1}},
\end{equation}
where
\begin{equation}
    \label{gamma w,N}
    \gamma_{w} \coloneqq 
    \begin{dcases}
        1 - \frac{1}{2u} \frac{N}{N+1}, & w=0, \\
        \frac{1}{N+1} + \frac{1}{2u} \frac{N}{N+1}, & w=1, \\
        0, & 2 \leq w \leq g-1.
    \end{dcases}
\end{equation}

\noindent \emph{Mixed symmetry space.} For $j=\frac{N-1}{2}$ and $a=0$, \eqref{Norm after measuring J^2 for a=0} becomes
\begin{equation}
    \Braket{\tilde \Psi^0_\frac{N-1}{2} | \tilde \Psi^0_\frac{N-1}{2}} = \sum_{m,m'} \alpha^*_m \alpha_{m'} \sum_k \beta^*_k \beta_{k+m-m'} C^{\frac{N-1}{2}, k+m-\frac{N}{2}}_{\frac{N}{2}, k-\frac{N}{2}; \frac{1}{2}, m} C^{\frac{N-1}{2}, k+m-\frac{N}{2}}_{\frac{N}{2}, k+m-m'-\frac{N}{2}; \frac{1}{2}, m'}.
\end{equation}
Expanding the sum over $m,m'$ and using the fact that $\beta_k \, \beta_{k \pm 1} = 0$ for gnu codes, we see that
\begin{equation}
    \Braket{\tilde \Psi^0_\frac{N-1}{2} | \tilde \Psi^0_\frac{N-1}{2}} = |\alpha_{-\frac{1}{2}}|^2 \sum_k |\beta_k|^2 \left( C^{\frac{N-1}{2}, k-\frac{N+1}{2}}_{\frac{N}{2}, k-\frac{N}{2}; \frac{1}{2}, -\frac{1}{2}} \right)^2 + |\alpha_\frac{1}{2}|^2 \sum_k |\beta_k|^2 \left( C^{\frac{N-1}{2}, k-\frac{N-1}{2}}_{\frac{N}{2}, k-\frac{N}{2}; \frac{1}{2}, \frac{1}{2}} \right)^2.
\end{equation}
By recursion \cite[Eq. (3.369)]{Sakurai_Napolitano_2020}, it can be shown that 
\begin{equation}
    \label{Recursed CG coeffs}
    C^{\frac{N-1}{2}, k-\frac{N+1}{2}}_{\frac{N}{2}, k-\frac{N}{2}; \frac{1}{2}, -\frac{1}{2}} = -\sqrt{\frac{N-k}{N+1}}, \qquad C^{\frac{N-1}{2}, k-\frac{N-1}{2}}_{\frac{N}{2}, k-\frac{N}{2}; \frac{1}{2}, \frac{1}{2}} = \sqrt{\frac{k}{N+1}}.
\end{equation}
Using \eqref{Recursed CG coeffs} and simultaneously reindexing $k \rightarrow gi$ for $0 \leq i \leq nu$, we obtain
\begin{equation}
    \Braket{\tilde \Psi^0_\frac{N-1}{2} | \tilde \Psi^0_\frac{N-1}{2}} = |\alpha_{-\frac{1}{2}}|^2 \sum_i |\beta_{gi}|^2 \, \frac{N-gi}{N+1} + |\alpha_\frac{1}{2}|^2 \sum_i |\beta_{gi}|^2 \, \frac{gi}{N+1}.
\end{equation}
Again, splitting the above summation over even/odd $i$ and substituting the state coefficients $\alpha_m, \beta_{gi}$ from \eqref{Definition of beta} gives
\begin{multline}
    \Braket{\tilde \Psi^0_\frac{N-1}{2} | \tilde \Psi^0_\frac{N-1}{2}} = \frac{2^{-(n-1)}}{N+1} \left( |v_1|^2 |c_0|^2 \sum_{i \, \text{even}} \binom{n}{i} \, (N-gi) + |v_1|^2 |c_1|^2 \sum_{i \, \text{odd}} \binom{n}{i} \, (N-gi) \right. \\
    + \left. |v_0|^2 |c_0|^2 \sum_{i \, \text{even}} \binom{n}{i} \, gi + |v_0|^2 |c_1|^2 \sum_{i \, \text{odd}} \binom{n}{i} \, gi \right).
\end{multline}
Using the binomial sums in \eqref{Binomial sums} the fact that $|c_0|^2 + |c_1|^2 = 1$, we obtain
\begin{equation}
    \Braket{\tilde \Psi^0_\frac{N-1}{2} | \tilde \Psi^0_\frac{N-1}{2}} = |v_0|^2 \left( 1 - \frac{1}{2u} \right) \frac{N}{N+1} + |v_1|^2 \frac{1}{2u} \frac{N}{N+1} 
\end{equation}
Thus the post-measurement state is given by
\begin{equation}
    \Ket{\tilde \Psi^0_\frac{N-1}{2}} = \frac{\Ket{\tilde \Psi^0_\frac{N-1}{2}}}{\sqrt{|v_0|^2 \, \gamma_{N,g-1} + |v_1|^2 \, \gamma_{N,0}}},
\end{equation}
where for $a=1, \ldots, N$,
\begin{equation}
    \label{gamma w,a}
    \gamma_{a,w} \coloneqq 
    \begin{dcases}
        \left( 1-\frac{1}{2u} \right) \frac{a}{a+1}, & w=0, \\
        0, & 1 \leq w \leq g-2, \\
        \frac{1}{2u} \frac{a}{a+1}, & w=g-1.
    \end{dcases}
\end{equation}

\noindent Similarly, for $j=\frac{N-1}{2}$ and $a=1, \ldots, N$, \eqref{Norm after measuring J^2} becomes
\begin{multline}
    \label{Complicated expression for a=1,..,N}
    \Braket{\tilde \Psi^a_\frac{N-1}{2} | \tilde \Psi^a_\frac{N-1}{2}} = |\alpha_\frac{1}{2}|^2 \sum_k \left( \sum_{l \in \mathcal{A}_k} \beta^*_{k,l} C^{\frac{a-1}{2}, l-\frac{a-1}{2}}_{\frac{a}{2}, l-\frac{a}{2}; \frac{1}{2}, \frac{1}{2}} C^{\frac{N-1}{2}, k-\frac{N-1}{2}}_{\frac{a-1}{2}, l-\frac{a-1}{2}; \frac{N-a}{2}, k-l-\frac{N-a}{2}} \right) \left( \sum_{l' \in \mathcal{A}_k} \beta_{k,l'} C^{\frac{a-1}{2}, l'-\frac{a-1}{2}}_{\frac{a}{2}, l'-\frac{a}{2}; \frac{1}{2}, \frac{1}{2}} C^{\frac{N-1}{2}, k-\frac{N-1}{2}}_{\frac{a-1}{2}, l'-\frac{a-1}{2}; \frac{N-a}{2}, k-l'-\frac{N-a}{2}} \right) \\
    + |\alpha_{-\frac{1}{2}}|^2 \sum_k \left( \sum_{l \in \mathcal{A}_k} \beta^*_{k,l} C^{\frac{a-1}{2}, l-\frac{a+1}{2}}_{\frac{a}{2}, l-\frac{a}{2}; \frac{1}{2}, -\frac{1}{2}} C^{\frac{N-1}{2}, k-\frac{N+1}{2}}_{\frac{a-1}{2}, l-\frac{a+1}{2}; \frac{N-a}{2}, k-l-\frac{N-a}{2}} \right) \left( \sum_{l' \in \mathcal{A}_k} \beta_{k,l'} C^{\frac{a-1}{2}, l'-\frac{a+1}{2}}_{\frac{a}{2}, l'-\frac{a}{2}; \frac{1}{2}, -\frac{1}{2}} C^{\frac{N-1}{2}, k-\frac{N+1}{2}}_{\frac{a-1}{2}, l'-\frac{a+1}{2}; \frac{N-a}{2}, k-l'-\frac{N-a}{2}} \right).
\end{multline}
where expanded the sum over $m,m'$ and once more used that $\beta_k \, \beta_{k \pm 1} = 0$. Using O'Hara's theorem \cite[Thm. 1]{O_Hara_2001} and recursion \cite[Eq. (3.369)]{Sakurai_Napolitano_2020} to simplify the CG coefficients in \eqref{Complicated expression for a=1,..,N}, we obtain
\begin{align}
    \notag C^{\frac{a-1}{2}, l-\frac{a-1}{2}}_{\frac{a}{2}, l-\frac{a}{2}; \frac{1}{2}, \frac{1}{2}} = -\sqrt{\frac{a-l}{a+1}}, \qquad &C^{\frac{N-1}{2}, k-\frac{N-1}{2}}_{\frac{a-1}{2}, l-\frac{a-1}{2}; \frac{N-a}{2}, k-l-\frac{N-a}{2}} = \sqrt{\frac{\binom{a-1}{l} \binom{N-a}{k-l}}{\binom{N-1}{k}}}; \\
    C^{\frac{a-1}{2}, l-\frac{a+1}{2}}_{\frac{a}{2}, l-\frac{a}{2}; \frac{1}{2}, -\frac{1}{2}} = \sqrt{\frac{l}{a+1}}, \qquad &C^{\frac{N-1}{2}, k-\frac{N+1}{2}}_{\frac{a-1}{2}, l-\frac{a+1}{2}; \frac{N-a}{2}, k-l-\frac{N-a}{2}} = \sqrt{\frac{\binom{a-1}{l-1} \binom{N-a}{k-l}}{\binom{N-1}{k-1}}},
\end{align}
and likewise for $l \rightarrow l'$. Thus \eqref{Complicated expression for a=1,..,N} becomes
\begin{multline}
    \Braket{\tilde \Psi^a_\frac{N-1}{2} | \tilde \Psi^a_\frac{N-1}{2}} = |\alpha_{-\frac{1}{2}}|^2 \frac{a}{a+1} \sum_k \frac{|\beta_k|^2}{\binom{N-1}{k-1}^2} \frac{k}{N} \left( \sum_{l \in \mathcal{A}_k} \binom{a-1}{l-1} \binom{N-a}{k-l} \right)^2 \\
    + |\alpha_{\frac{1}{2}}|^2 \frac{a}{a+1} \sum_k \frac{|\beta_k|^2}{\binom{N-1}{k}^2} \left( 1-\frac{k}{N} \right) \left( \sum_{l \in \mathcal{A}_k} \binom{a-1}{l} \binom{N-a}{k-l} \right)^2.
\end{multline}
Again, making use of the Chu-Vandermonde identity and reindexing $k \rightarrow gi$ for $0 \leq i \leq nu$ yields 
\begin{equation}
    \Braket{\tilde \Psi^a_\frac{N-1}{2} | \tilde \Psi^a_\frac{N-1}{2}} = |\alpha_{-\frac{1}{2}}|^2 \frac{a}{a+1} \sum_k |\beta_{gi}|^2 \, \frac{gi}{N} + |\alpha_{\frac{1}{2}}|^2 \frac{a}{a+1} \sum_k |\beta_{gi}|^2 \left( 1-\frac{gi}{N} \right).
\end{equation}
Substituting the state coefficients $\alpha_m, \beta_{gi}$ from \eqref{Definition of beta} and using $N=gnu$ gives 
\begin{multline}
    \Braket{\tilde \Psi^a_\frac{N-1}{2} | \tilde \Psi^a_\frac{N-1}{2}} = \frac{2^{-(n-1)}}{a+1} \left( |v_1|^2 |c_0|^2 \sum_{i \, \text{even}} \binom{n}{i} \, \left( 1-\frac{i}{nu} \right) + |v_1|^2 |c_1|^2 \sum_{i \, \text{odd}} \binom{n}{i} \, \left( 1-\frac{i}{nu} \right) \right. \\
    + \left. |v_0|^2 |c_0|^2 \sum_{i \, \text{even}} \binom{n}{i} \, \frac{i}{nu} + |v_0|^2 |c_1|^2 \sum_{i \, \text{odd}} \binom{n}{i} \, \frac{i}{nu} \right).
\end{multline}
As before, evaluating the binomial sums using \eqref{Binomial sums} yields the post-measurement state
\begin{equation}
    \Ket{\Psi^a_\frac{N-1}{2}} = \frac{\Ket{\tilde \Psi^a_\frac{N-1}{2}}}{\sqrt{|v_0|^2 \, \gamma_{a,g-1} + |v_1|^2 \, \gamma_{a,0}}},
\end{equation}
where $\gamma_{a,w}$ is defined as in \eqref{gamma w,a}.

\noindent \textbf{Modular measurement of $\hat{J}^z$.} Here, we provide a proof of Lem. \ref{Norm-preserving lemma}. 

\noindent \emph{Symmetric space.} After measuring $\hat{J}^2$ and obtaining $j = \frac{N+1}{2}$, for all $a=0, \ldots, N$ the post-measurement state is
\begin{equation}
    \label{Symmetric post-measurement state}
    \Ket{\Psi_\frac{N+1}{2}} = \frac{1}{\sqrt{|v_0|^2 \, \gamma_0 + |v_1|^2 \, \gamma_1}} \sum_{k,m} \alpha_m \beta_k \, C^{k+m-\frac{N}{2}}_{k-\frac{N}{2}; m} \Ket{D^{N+1}_{k+m+\frac{1}{2}}},
\end{equation}
where we suppressed the $j$-labels for brevity. For $w=0, \ldots, g-1$, from \eqref{General projector} the projectors onto the symmetric space are given by 
\begin{equation}
    \label{Symmetric projector}
    \mathcal{P}^w_{\frac{N+1}{2}} = \sum_i \Ket{D^{N+1}_{gi+w}} \Bra{D^{N+1}_{gi+w}},
\end{equation}
where $0 \leq i \leq \lfloor \frac{N+1-w}{g} \rfloor$. Projecting \eqref{Symmetric post-measurement state} onto the symmetric space via \eqref{Symmetric projector} yields
\begin{equation}
    \label{Symmetric norms for general w}
    \Ket{\tilde \Psi^w_\frac{N+1}{2}} = \frac{1}{\sqrt{|v_0|^2 \, \gamma_0 + |v_1|^2 \, \gamma_1}} \left( \alpha_{-\frac{1}{2}} \sum_i \beta_{gi+w} C^{gi+w-\frac{N+1}{2}}_{gi+w-\frac{N}{2}; -\frac{1}{2}} \Ket{D^{N+1}_{gi+w}} + \alpha_\frac{1}{2} \sum_i \beta_{gi+w-1} C^{gi+w-\frac{N+1}{2}}_{gi+w-1-\frac{N}{2}; \frac{1}{2}} \Ket{D^{N+1}_{gi+w}} \right).
\end{equation}
We have two cases to consider, $w=0$ and $w=1$, since $\beta_k \neq 0$ for $k=gi$. When $w=0$, \eqref{Symmetric norms for general w} becomes
\begin{equation}
    \Ket{\tilde \Psi^0_\frac{N+1}{2}} = \tilde v_0' \left( c_0 \Ket{\tilde 0^0_{\frac{N+1}{2}}} + c_1 \Ket{\tilde 1^0_{\frac{N+1}{2}}} \right),
\end{equation}
where
\begin{equation}
    \tilde v_0' \coloneqq \frac{v_0}{\sqrt{|v_0|^2 \, \gamma_0 + |v_1|^2 \, \gamma_1}}
\end{equation}
and for $0 \leq i \leq nu$,
\begin{align}
    \label{Unnormalised symmetric logical codewords}
    \notag \Ket{\tilde 0^0_{\frac{N+1}{2}}} &\coloneqq 2^{-{\frac{n-1}{2}}} \sum_{i \, \text{even}} \sqrt{\binom{n}{i}} \, C^{gi-\frac{N+1}{2}}_{gi-\frac{N}{2}; -\frac{1}{2}} \Ket{D^{N+1}_{gi}}, \\ 
    \Ket{\tilde 1^0_{\frac{N+1}{2}}} &\coloneqq 2^{-{\frac{n-1}{2}}} \sum_{i \, \text{odd}} \sqrt{\binom{n}{i}} \, C^{gi-\frac{N+1}{2}}_{gi-\frac{N}{2}; -\frac{1}{2}} \Ket{D^{N+1}_{gi}}.
\end{align}
An analogous result can be obtained for $w=1$.

\noindent \emph{Mixed symmetry space.} After measuring $\hat{J}^2$ and obtaining $j = \frac{N-1}{2}$, for $a=0$ the post-measurement state is
\begin{equation}
    \label{Mixed symmetry post-measurement state}
    \Ket{\Psi^0_\frac{N-1}{2}} = \frac{1}{\sqrt{|v_0|^2 \, \gamma_{N,g-1} + |v_1|^2 \, \gamma_{N,0}}} \sum_{k,m,p} d_{0,p} \alpha_m \beta_k C^{k+m-\frac{N}{2}}_{k-\frac{N}{2}; m} \ket{k+m-N/2}_p.
\end{equation}
For $w=0, \ldots, g-1$, from \eqref{General projector} the projectors onto the mixed symmetry space are
\begin{equation}
    \label{Projectors onto mixed symmetry space}
    \mathcal{P}^w_{\frac{N-1}{2}} = \sum_{i,p} \, \Ket{\, gi+w-\frac{N-1}{2}}_p \Bra{gi+w-\frac{N-1}{2} \,}_p, 
\end{equation}
where $1 \leq p \leq N$ and $0 \leq i \leq \lfloor \frac{N-1-w}{g} \rfloor$. Projecting \eqref{Mixed symmetry post-measurement state} onto the mixed symmetry space via \eqref{Projectors onto mixed symmetry space} yields
\begin{multline}
    \Ket{\tilde \Psi^{0,w}_\frac{N-1}{2}} = \frac{1}{\sqrt{|v_0|^2 \, \gamma_{g-1,N} + |v_1|^2 \, \gamma_{0,N}}} \left( \alpha_{-\frac{1}{2}} \sum_{i,p} d_{a,p} \beta_{gi+w+1} \, C^{gi+w-\frac{N-1}{2}}_{gi+w+1-\frac{N}{2}; -\frac{1}{2}} \Ket{\, gi+w-\frac{N-1}{2}}_p \right. \\
    \left. + \alpha_\frac{1}{2} \sum_{i,p} d_{a,p} \beta_{gi+w}  C^{gi+w-\frac{N-1}{2}}_{gi+w-\frac{N}{2}; \frac{1}{2}} \Ket{\, gi+w-\frac{N-1}{2}}_p \right).
\end{multline}
Again, we have only two cases to consider, $w=0$ and $w=g-1$. When $w=0$,
\begin{equation}
    \Ket{\tilde \Psi^{0,0}_\frac{N-1}{2}} = \tilde v_{0,1} \left( c_0 \Ket{\tilde 0^{0,0}_{\frac{N-1}{2}}} + c_1 \Ket{\tilde 1^{0,0}_{\frac{N-1}{2}}} \right),
\end{equation}
where 
\begin{equation}
    \tilde v_{0,1} \coloneqq \frac{v_1}{\sqrt{|v_0|^2 \, \gamma_{N,g-1} + |v_1|^2 \, \gamma_{N,0}}}
\end{equation}
and for $0 \leq i \leq nu-1$, 
\begin{align}
    \label{Unnormalised mixed symmetry logical codewords for a=0}
    \notag \Ket{\tilde 0^{0,0}_{\frac{N-1}{2}}} &\coloneqq 2^{-{\frac{n-1}{2}}} \sum_{i \, \text{even}, \, p} d_{0,p} \sqrt{\binom{n}{i}} \, C^{gi-\frac{N-1}{2}}_{gi-\frac{N}{2}; \frac{1}{2}} \Ket{\, gi-\frac{N-1}{2}}_p, \\
    \Ket{\tilde 1^{0,0}_{\frac{N-1}{2}}} &\coloneqq 2^{-{\frac{n-1}{2}}} \sum_{i \, \text{odd}, \, p} d_{0,p} \sqrt{\binom{n}{i}} \, C^{gi-\frac{N-1}{2}}_{gi-\frac{N}{2}; \frac{1}{2}} \Ket{\, gi-\frac{N-1}{2}}_p.
\end{align}
The case for $w=g-1$ follows similarly. On the other hand, for $a=1, \ldots, N$ the post-measurement state is
\begin{equation}
    \label{Post-measurement state for a=1,..,N}
    \Ket{\Psi^a_\frac{N-1}{2}} = \frac{1}{\sqrt{|v_0|^2 \, \gamma_{a,g-1} + |v_1|^2 \, \gamma_{a,0}}} \sum_{\substack{k, m, p, \\ l \in \mathcal{A}_k}} d_{a,p} \alpha_m \beta_{k, l} C^{l+m-\frac{a}{2}}_{l-\frac{a}{2}; m} C^{k+m-\frac{N}{2}}_{l+m-\frac{a}{2}; k-l-\frac{N-a}{2}} \\ \Ket{\, k+m-N/2}_p.
\end{equation}
Projecting \eqref{Post-measurement state for a=1,..,N} onto the mixed symmetry space via \eqref{Projectors onto mixed symmetry space} yields
\begin{multline}
    \label{Mixed symmetry projection for a=1,...,N}
    \Ket{\tilde \Psi^{a,w}_\frac{N-1}{2}} = \frac{1}{\sqrt{|v_0|^2 \, \gamma_{a,g-1} + |v_1|^2 \, \gamma_{a,0}}} \left( \alpha_{-\frac{1}{2}} \sum_{\substack{i, p, \\ l \in \mathcal{A}_{gi+w+1}}} d_{a,p} \beta_{gi+w+1, l}  C^{l-\frac{a+1}{2}}_{l-\frac{a}{2}; -\frac{1}{2}} C^{gi+w-\frac{N-1}{2}}_{l-\frac{a+1}{2}; gi+w+1-l-\frac{N-a}{2}} \Ket{\, gi+w-\frac{N-1}{2}}_p \right. \\
    \left. + \alpha_\frac{1}{2} \sum_{\substack{i, p, \\ l \in \mathcal{A}_{gi+w}}} d_{a,p} \beta_{gi+w, l} \, C^{l-\frac{a-1}{2}}_{l-\frac{a}{2}; \frac{1}{2}} C^{gi+w-\frac{N-1}{2}}_{l-\frac{a-1}{2}; gi+w-l-\frac{N-a}{2}} \Ket{\, gi+w-\frac{N-1}{2}}_p \right).
\end{multline}
As before, we have two cases to consider, $w=0$ and $w=g-1$. When $w=0$,
\begin{equation}
    \Ket{\tilde \Psi^{a,0}_\frac{N-1}{2}} = \tilde v_{a,1} \left( c_0 \Ket{\tilde 0^{a,0}_{\frac{N-1}{2}}} + c_1 \Ket{\tilde 1^{a,0}_{\frac{N-1}{2}}} \right),
\end{equation}
where 
\begin{equation}
    \tilde v_{a,1} \coloneqq \frac{v_1}{\sqrt{|v_0|^2 \, \gamma_{a,g-1} + |v_1|^2 \, \gamma_{a,0}}}
\end{equation}
and for $0 \leq i \leq nu-1$, 
\begin{align}
    \label{Unnormalised mixed symmetry logical codewords for a=1,...,N}
    \notag \Ket{\tilde 0^{a,0}_{\frac{N-1}{2}}} &\coloneqq 2^{-{\frac{n-1}{2}}} \sum_{\substack{i \, \text{even}, \\ p, \, l \in \mathcal{A}_{gi}}} d_{a,p} \sqrt{\frac{\binom{n}{i} \binom{a}{l} \binom{N-a}{gi-l}}{\binom{N}{gi}}} \, C^{l-\frac{a-1}{2}}_{l-\frac{a}{2}; \frac{1}{2}} C^{gi-\frac{N-1}{2}}_{l-\frac{a-1}{2}; gi-l-\frac{N-a}{2}} \Ket{\, gi-\frac{N-1}{2}}_p, \\
    \Ket{\tilde 1^{a,0}_{\frac{N-1}{2}}} &\coloneqq 2^{-{\frac{n-1}{2}}} \sum_{\substack{i \, \text{odd}, \\ p, \, l \in \mathcal{A}_{gi}}} d_{a,p} \sqrt{\frac{\binom{n}{i} \binom{a}{l} \binom{N-a}{gi-l}}{\binom{N}{gi}}} \, C^{l-\frac{a-1}{2}}_{l-\frac{a}{2}; \frac{1}{2}} C^{gi-\frac{N-1}{2}}_{l-\frac{a-1}{2}; gi-l-\frac{N-a}{2}} \Ket{\, gi-\frac{N-1}{2}}_p.
\end{align}
The case for $w=g-1$ follows similarly. We conclude with a proof of Lem. \ref{Norm-preserving lemma}.
    
\begin{proof}[Proof of Lem. \ref{Norm-preserving lemma}]
    We have two cases to consider, $j=\frac{N \pm 1}{2}$. We begin with the symmetric space with $j=\frac{N+1}{2}$.

    \noindent \textbf{Case I:} \emph{Symmetric space.} From \eqref{Unnormalised symmetric logical codewords}, for $w=0$ we have
    \begin{equation}
        \label{Symmetric norms}
        \Braket{\tilde x^0_{\frac{N+1}{2}} | \tilde x^0_{\frac{N+1}{2}}} = 2^{-(n-1)} \sum_{i \equiv x \, (\text{mod} \, 2)} \binom{n}{i} \left( C^{gi-\frac{N+1}{2}}_{gi-\frac{N}{2}; -\frac{1}{2}} \right)^2.
    \end{equation}
    For equality between $x=0$ and $x=1$, the norm-preserving condition
    \begin{equation}
        \label{Condition}
        \sum_i \binom{n}{i} (-1)^i \left( C^{gi-\frac{N+1}{2}}_{gi-\frac{N}{2}; -\frac{1}{2}} \right)^2 = 0
    \end{equation}
    must hold. By O'Hara's theorem \cite[Thm. 1]{O_Hara_2001}, we can write
    \begin{equation}
        \label{CG squared}
        \left( C^{gi-\frac{N+1}{2}}_{gi-\frac{N}{2}; -\frac{1}{2}} \right)^2 = 1-\frac{gi}{N+1}.
    \end{equation}
    Thus by \cite[Eq. (12)]{Ouyang_2014}, the condition in \eqref{Condition} holds for all $n \geq 2$ since $1-\frac{gi}{N+1}$ is linear in $i$. Hence $\braket{\tilde x^0_{\frac{N+1}{2}} | \tilde x^0_{\frac{N+1}{2}}}$ are equal for $x \in \{0,1\}$, and the case for $w=1$ follows similarly. We now show an analogous result for the mixed symmetry space.

    \noindent \textbf{Case II:} \emph{Mixed symmetry space.} For $a=0$, from \eqref{Unnormalised mixed symmetry logical codewords for a=0} we have 
    \begin{equation}
        \label{Mixed symmetry norms for a=0}
        \Braket{\tilde x^{0,0}_{\frac{N+1}{2}} | \tilde x^{0,0}_{\frac{N+1}{2}}} = 2^{-(n-1)} \sum_{i \equiv x \, (\text{mod} \, 2)} \binom{n}{i} \left( C^{gi-\frac{N-1}{2}}_{gi-\frac{N}{2}; \frac{1}{2}} \right)^2,
    \end{equation}
    where we used that $\sum_p |d_{a,p}|^2 = 1$. We obtain a very similar norm-preserving condition to \eqref{Condition},
    \begin{equation}
        \label{Condition mixed sym}
        \sum_i \binom{n}{i} (-1)^i \left( C^{gi-\frac{N-1}{2}}_{gi-\frac{N}{2}; \frac{1}{2}} \right)^2 = 0.
    \end{equation}
    By recursion \cite[Eq. (3.369)]{Sakurai_Napolitano_2020}, it can be shown that
    \begin{equation}
        C^{gi-\frac{N-1}{2}}_{gi-\frac{N}{2}; \frac{1}{2}} = -\sqrt{\frac{N-gi}{N+1}},
    \end{equation}
    and so $\Big( C^{gi-\frac{N-1}{2}}_{gi-\frac{N}{2}; \frac{1}{2}} \Big)^2$ is linear in $i$. Thus by \cite[Eq. (12)]{Ouyang_2014}, the condition in \eqref{Condition mixed sym} holds for all $n \geq 2$, and $\braket{\tilde x^{0,0}_{\frac{N+1}{2}} | \tilde x^{0,0}_{\frac{N+1}{2}}}$ are equal for $x \in \{0,1\}$. The case for $w=1$ follows similarly. On the other hand, for $a=1, \ldots, N$ one can use \eqref{Unnormalised mixed symmetry logical codewords for a=1,...,N} to show that
    \begin{equation}
        \label{Unnormalised norms}
        \Braket{\tilde x^{a,0}_{\frac{N-1}{2}} | \tilde x^{a,0}_{\frac{N-1}{2}}} = 2^{-(n-1)} \sum_{i \equiv x \, (\text{mod} \, 2)} \frac{\binom{n}{i}}{\binom{N}{gi}} \, f_a(i),
    \end{equation}
    where
    \begin{equation}
        \label{Norm function}
        f_a(i) \coloneqq \left( \sum_{l \in \mathcal{A}_{gi}} \sqrt{\binom{a}{l}\binom{N-a}{gi-l}} \, C^{l-\frac{a-1}{2}}_{l-\frac{a}{2}; \frac{1}{2}} C^{gi-\frac{N-1}{2}}_{l-\frac{a-1}{2}; gi-l-\frac{N-a}{2}} \right)^2.
    \end{equation}
    By recursion \cite[Eq. (3.369)]{Sakurai_Napolitano_2020}, 
    \begin{equation}   
        \label{CG recursion}
        C^{l-\frac{a-1}{2}}_{l-\frac{a}{2}; \frac{1}{2}} = -\sqrt{\frac{a-l}{a+1}}
    \end{equation}
    and by O'Hara's theorem \cite[Thm. 1]{O_Hara_2001},
    \begin{equation}
        \label{Application of O'Hara}
        C^{gi-\frac{N-1}{2}}_{l-\frac{a-1}{2}; gi-l-\frac{N-a}{2}} = \sqrt{\frac{\binom{a-1}{l}\binom{N-a}{gi-l}}{\binom{N-1}{gi}}}.
    \end{equation}
    Using \eqref{CG recursion}, \eqref{Application of O'Hara} and the trivial identity $\binom{a}{l} = \frac{a}{a-l} \, \binom{a-1}{l}$, \eqref{Norm function} then becomes
    \begin{equation}
        f_a(i) = \frac{a}{a+1} \frac{1}{\binom{N-1}{gi}} \left( \sum_{l \in \mathcal{A}_{gi}} \binom{a-1}{l} \binom{N-a}{gi-l} \right)^2.
    \end{equation}
    By the well-known Chu-Vandermonde identity, 
    \begin{equation}
        \sum_{l \in \mathcal{A}_{gi}} \binom{a-1}{l} \binom{N-a}{gi-l} = \binom{N-1}{gi},
    \end{equation}
    and thus $f_a(i) = \frac{a}{a+1} \,\binom{N-1}{gi}$. Substituting this into \eqref{Unnormalised norms} gives
    \begin{equation}
        \label{Summation}
        \Braket{\tilde x^{a,0}_{\frac{N-1}{2}} | \tilde x^{a,0}_{\frac{N-1}{2}}} = 2^{-(n-1)} \frac{a}{a+1} \sum_{i \equiv x \, (\text{mod} \, 2)} \binom{n}{i} \left( 1 - \frac{i}{nu} \right).
    \end{equation}
    Clearly the summand in \eqref{Summation} is zero for all $n \leq i \leq nu$, so the norm-preserving condition reduces to
    \begin{equation}
        \label{Binomial sum}
        \sum_i \binom{n}{i} (-1)^i \left( 1-\frac{i}{nu} \right) = 0
    \end{equation}
    for $0 \leq i \leq n$. Again, by \cite[Eq. (12)]{Ouyang_2014} this holds for all $n \geq 2$ since $1-\frac{i}{nu}$ is linear in $i$, and thus $\braket{\tilde x^{a,0}_{\frac{N-1}{2}} | \tilde x^{a,0}_{\frac{N-1}{2}}}$ are equal for $x \in \{0,1\}$. The case for $w=g-1$ follows in an identical manner. 
\end{proof}

\twocolumngrid


%

\end{document}